\documentclass[11pt]{article}
\usepackage{geometry}                
\geometry{letterpaper}                   
\usepackage{graphicx}
\usepackage{amssymb,amsmath,amsfonts,amsthm,bm}
\usepackage{authblk}
\usepackage{natbib}
\usepackage{ifthen}
\usepackage[colorinlistoftodos, textwidth=2cm, obeyFinal]{todonotes}
\usepackage{mathtools}
\usepackage{framed}
\usepackage{enumitem}
\usepackage[linesnumbered,lined,ruled]{algorithm2e}

\textwidth 15cm
\textheight 23cm
\topmargin -1cm
\oddsidemargin .42cm
\evensidemargin .42cm

\newtheorem{theorem}{Theorem}
\newtheorem{lemma}{Lemma}

\newtheorem{proposition}{Proposition}
\theoremstyle{definition}
\newtheorem{definition}{Definition}



\newcommand{\R}{\mathbb{R}}

\newcommand{\Prob}[2][n]{\mathbf{P}\SwitchBracketsizeLeft{#1}\LeftBracketSize\{#2\SwitchBracketsizeRight{#1}\RightBracketSize\}}

\newcommand{\vconv}{\mathrm{\mathop{\textbf{vconv}}}}
\newcommand{\conv}{\mathop{\textbf{conv}}}

\allowdisplaybreaks


\newcommand{\abs}[2][n]{\SwitchBracketsizeLeft{#1}\LeftBracketSize\lvert#2\SwitchBracketsizeRight{#1}\RightBracketSize\rvert}

\newcommand{\set}[3][a]{\SwitchBracketsizeLeft{#1}\LeftBracketSize\{#2 : #3\SwitchBracketsizeRight{#1}\RightBracketSize\}}

\newcommand{\NextScriptStyle}[1]{{\scriptstyle{#1}}}
\newcommand{\NextScriptScriptStyle}[1]{{\scriptscriptstyle{#1}}}
\newcommand{\NextTextStyle}[1]{{\textstyle{#1}}}
\newcommand{\NextDisplayStyle}[1]{{\displaystyle{#1}}}
\newcommand{\SwitchBracketsizeLeft}[1]{
  \ifthenelse{\equal{#1}{b}\OR\equal{#1}{big}}{\let\LeftBracketSize=\bigl}{
    \ifthenelse{\equal{#1}{B}\OR\equal{#1}{Big}}{\let\LeftBracketSize=\Bigl}{
      \ifthenelse{\equal{#1}{g}\OR\equal{#1}{bigg}}{\let\LeftBracketSize=\biggl}{
    \ifthenelse{\equal{#1}{G}\OR\equal{#1}{Bigg}}{\let\LeftBracketSize=\Biggl}{
      \ifthenelse{\equal{#1}{s}\OR\equal{#1}{small}}{\let\LeftBracketSize=\NextScriptStyle}{
        \ifthenelse{\equal{#1}{ss}}{\let\LeftBracketSize=\NextScriptScriptStyle}{
          \ifthenelse{\equal{#1}{t}\OR\equal{#1}{text}}{\let\LeftBracketSize=\NextTextStyle}{
        \ifthenelse{\equal{#1}{d}\OR\equal{#1}{display}}{\let\LeftBracketSize=\NextDisplayStyle}{
          \ifthenelse{\equal{#1}{a}\OR\equal{#1}{auto}}{\let\LeftBracketSize=\left}{
            \let\LeftBracketSize=\relax}}}}}}}}}}
\newcommand{\SwitchBracketsizeRight}[1]{
  \ifthenelse{\equal{#1}{b}\OR\equal{#1}{big}}{\let\RightBracketSize=\bigr}{
    \ifthenelse{\equal{#1}{B}\OR\equal{#1}{Big}}{\let\RightBracketSize=\Bigr}{
      \ifthenelse{\equal{#1}{g}\OR\equal{#1}{bigg}}{\let\RightBracketSize=\biggr}{
    \ifthenelse{\equal{#1}{G}\OR\equal{#1}{Bigg}}{\let\RightBracketSize=\Biggr}{
      \ifthenelse{\equal{#1}{s}\OR\equal{#1}{small}}{\let\RightBracketSize=\NextScriptStyle}{
        \ifthenelse{\equal{#1}{ss}}{\let\RightBracketSize=\NextScriptScriptStyle}{
          \ifthenelse{\equal{#1}{t}\OR\equal{#1}{text}}{\let\RightBracketSize=\NextTextStyle}{
        \ifthenelse{\equal{#1}{d}\OR\equal{#1}{display}}{\let\RightBracketSize=\NextDisplayStyle}{
          \ifthenelse{\equal{#1}{a}\OR\equal{#1}{auto}}{\let\RightBracketSize=\right}{
            \let\RightBracketSize=\relax}}}}}}}}}}

\title{A linear time algorithm for multiscale quantile simulation}
\author[1,$\star$]{Chengcheng~Huang}
\author[2]{Housen~Li}
\author[1]{Lizhi~Cheng}
\author[1]{Wei~Peng}

\affil[1]{College of Liberal Arts and Sciences, National University of Defense Technology, 410073 Changsha, China}
\affil[2]{Institute for Mathematical Stochastics, University of G\"ottingen, Goldschmidtstrasse 7, 37077 G\"ottingen, Germany}

\affil[$\star$]{Correspondence: {\tt huangchengcheng12@nudt.edu.cn}}

\date{}                                           

\begin{document}


\maketitle

\begin{abstract}
Change-point problems have appeared in a great many applications for example cancer genetics, econometrics and climate change. Modern multiscale type segmentation methods are considered to be a statistically efficient approach for multiple change-point detection, which minimize the number of change-points under a multiscale side-constraint. The constraint threshold plays a critical role in balancing the data-fit and model complexity. However, the computation time of such a threshold is quadratic in terms of sample size $n$, making it impractical for large scale problems. In this paper we proposed an $\mathcal{O}(n)$ algorithm by utilizing the hidden quasiconvexity structure of the problem. It applies to all regression models in exponential family with arbitrary convex scale penalties. Simulations verify its computational efficiency and accuracy. An implementation is provided in R-package ``linearQ'' on CRAN.  
\end{abstract}


\noindent {\it Key words and phrases: Change-point detection, multiscale inference, quantile simulation.}

\section{Introduction}\label{sec:intro}

In this paper, we assume that observations $Y=(Y_1,\dots,Y_n)$ are independent from the regression model 
\begin{align}
\label{model}
Y_i \thicksim F_{\vartheta(i/n)},\qquad i=0,\dots,n-1
\end{align}
where $\{F_{\theta}\}_{\theta \in \Theta}$ is a one-dimensional exponential family distribution with densities $f_\theta$.  The parametric function $\vartheta:[0,1) \rightarrow \Theta \subseteq \mathbb{R}$ is a right-continuous piecewise constant function. The model \eqref{model} includes the Gaussian mean regression as a special case, that is,
\begin{equation}\label{GaussMean}
Y_i \thicksim {\vartheta(i/n)} + \sigma\varepsilon_i,\qquad i=0,\dots,n-1, 
\end{equation}
where $\sigma > 0$ and $\varepsilon_i\stackrel{\mathrm{i.i.d.}}{\sim} \mathcal{N}(0,1)$ the standard Gaussian.

The (multiple) change-point problem amounts to estimating the number and locations of change-points and the value of function $\vartheta$ on each segment. 
The study of change-point detection problems has a long and rich history in the statistical literatures \citep{Carlstein1994,Cs2011,DavidSiegmund2013}, and has experienced a revival in recent years, mainly due to modern large scale applications, for example in bioinformatics, predicting transmembrane helix locations \citep{Lio2000}, detecting changes in the DNA copy number \citep{Olshen2004,Venkatraman2007};  in climate, analyzing undocumented change-points in climate data \citep{Reeves2007};  and in economics and finance, identifying change-points in financial volatility \citep{Spo09}. 

Among the vast literature of change-point problems,  we consider the so-called multiscale change-point segmentation methods \citep[see e.g.][]{FriMunSie14,LiMuSi16}, which are statistically well-understood and meanwhile practically well-performed, see also \citep{DavHoeKra12,Hotz2013,LiGuMu17}. These multiscale segmentation methods minimize the number of the change-points subjected to a side-constraint that  multiscale statistics $T_n$ does not exceed a specified threshold $q$ (see Section \ref{ss:mcs} for a formal definition). The threshold $q$, as a balancing parameter between the data-fit and model complexity, is often chosen as the quantile of $T_n$ under null distribution (e.g.,~$\vartheta \equiv 0$).  
Unfortunately, the computation of such a quantile involves the evaluation of $T_n$, which has quadratic computational time in terms of sample size, and has to be repeated sufficiently many times to guarantee a proper estimation accuracy. This makes the multiscale segmentation methods impractical for large scale applications (e.g, for $n \ge 100,000$). To overcome this computation bottleneck, we proposed, in this paper, a fast algorithm with linear computational complexity for the evaluation of $T_n$, see Figure \ref{fig:demo} for an illustration. 

\begin{figure}[!h]
\centering
 \includegraphics[width=4.5in]{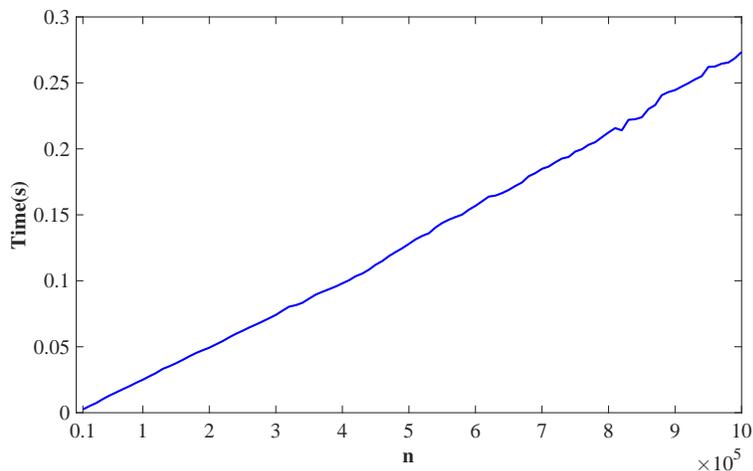}
\caption{Average computation time of $T_n$ over $100$ independent repetitions. \label{fig:demo}} 
\end{figure}

The rest of the paper is organized as follows. In Section~\ref{s:bg}, we introduce the multiscale change-point segmentation methods, and some basic concepts from algorithmic geometry.  In Section~\ref{s:main} we propose a linear algorithm for the evaluation of  $T_n$ and give its complexity analysis.  The performance of the proposed algorithm is examined by simulations in Section~\ref{s:sim}. Section~\ref{s:cld} concludes this paper.

\section{Background}\label{s:bg}
\subsection{Multiscale change-point segmentation}\label{ss:mcs}
We start with a brief introduction of the multiscale change-point segmentation methods for the change-point problem in \eqref{model}. Recall that the underlying truth $\vartheta$ is right-continuous and piecewise constant, i.e.,
\begin{align*}
\vartheta(t) = \sum \limits_{m=0}^M \theta_m \bm{\mathrm{1}}_{[\tau_m,\tau_{m+1})}(t).
\end{align*}
where $0 = \tau_0 < \dots < \tau_{M+1}=1$ denote the locations of change-points, and $\theta_m \in \mathbb{R}$ the function value on the $m$th segment with $\theta_m \neq \theta_{m+1}$.   Let $\mathcal{S}$ denote the space of all right continuous step functions, and, for every $\vartheta$ in $\mathcal{S}$, let $\mathcal{V}(\vartheta)$ denote the set of change-points and $\#\mathcal{V}(\vartheta)$  the number of change-points. 

The \emph{multiscale change-point segmentation estimator}  $\hat{\vartheta}$ is  the solution of  the optimization problem \citep{FriMunSie14, LiMuSi16,LiGuMu17}:
\begin{align}
\label{optim}
\inf_{\vartheta \in \mathcal{S}} \#\mathcal{V}(\vartheta) \qquad\qquad \text{subject to}~~T_n(Y,\vartheta)\leq q.
\end{align}
where $q$ is a user-specified threshold, and $T_n(Y,\vartheta)$ a \emph{multiscale statistic}. By $\mathcal{L}$  we denote the collection of all subintervals of $[0,1)$. The multiscale statistic $T_n(Y,\vartheta)$ is defined as the maximum of penalized local likelihood ratio statistic on every interval $I \in \mathcal{L}$ where $\vartheta = \theta_I$ is constant, that is, 
\begin{align}
\label{Tn}
T_n(Y,\vartheta) = \max_{\substack{{I \in \mathcal{L}} \\ {\vartheta(t)=\theta_I, ~t\in I}}} {T_I(Y,\theta)} - p_I.
\end{align}
Here the penalty terms $p_I$ play a role as scale (i.e., length of $I$) calibration, which aim to put different scales on equal baseline especially for small intervals \citep[see][]{Dumbgen2001,FriMunSie14}. The local likelihood ratio statistic $T_I(Y,\theta)$ is a testing statistic on the hypothesis $H_0: \theta = \theta_0$ versus the alternative $H_1: \theta \neq \theta_0$ with $\theta \equiv \vartheta(t)$ on interval $I$, more precisely, 
\begin{equation}
\label{localstatistics}
T_I(Y,\theta_0) = \sqrt{ \log\biggl( \frac{\sup_{\theta \in \Theta} \prod_{k/n \in I} f_\theta(Y_k)}{\prod_{k/n \in I} f_{\theta_0}(Y_k)}\biggr)}. 
\end{equation}
Note that the specific form $\sqrt{\log(\cdot)}$ in \eqref{localstatistics} is crucial  if one wants to use a simple, additive penalty term that yields statistical optimality, as in \eqref{Tn}, see \cite{RiWa13}.

The user-specific threshold $q \in \mathbb{R}$ in \eqref{optim} controls the probability of overestimating and underestimating the number of change-points. From asymptotic analysis, it is sufficient to choose a universal threshold $q \asymp \sqrt{\log n}$, see \cite{LiGuMu17}. In practice, it is recommended to select $q\coloneqq q_n(\alpha)$ the $1-\alpha$ quantile of null asymptotic distribution of $T_n(Y,\vartheta)$ with certain significance level $\alpha \in [0,1)$, which allows for an immediate statistical interpretation
\[
\Prob{\#\mathcal{V}(\hat\vartheta)\le \#\mathcal{V}(\vartheta)} \ge 1-\alpha
\]
see \cite{FriMunSie14}. Given such choices of $q$, the solution to problem \eqref{optim} exists but may be non-unique, in which case one is free to choose the solution, such as the constrained maximum likelihood estimator  \citep{FriMunSie14}.  
Note that the value of $q_n(\alpha)$ can be estimated via Monte Carlo simulations, because the distribution of $T_n(Y,\vartheta)$ or its asymptotic distribution \citep[Theorem 2.1]{FriMunSie14} is independent of the unknown truth $\vartheta$. 

\subsection{Constrained Minkowski sum}\label{ss:cMs}
We now restrict ourselves to the Euclidean space $\R^2$. The Minkowski sum, a fundamental concept in algorithmic geometry,  is defined as $ P \oplus Q = \{ p+q \,|\, p\in P, q\in Q \}$ for $P, Q \subseteq \mathbb{R}^2$. As in \cite{BEH09}, we define the constrained Minkowski sum as
\begin{align*}
 (P + Q)^+= \{ x \in P \oplus Q | \, x_1 > 0\}   \qquad \text {with } x_1 \text{ the first coordinate of point } x \in \R^2
 \end{align*}
By $\conv(P)$ we denote the convex hull of $P$, and by $\vconv(P)$ the set of vertices of $\conv(P)$. \cite{BerHof06}, \cite{BEH07,BEH09} have shown that $\vconv(P+Q)^+$ can be computed in $\mathcal{O}(\abs{P} +\abs{Q})$ time, if $P$ and $Q$ are sorted with respect to the first coordinate. More precisely, a set $R$  can be computed such that $\vconv(P + Q)^+ \subseteq R  \subseteq (P + Q)^+$ and $\abs{R} \le \min\{2\cdot \abs{P}+\abs{Q},\,\abs{P}+2\cdot \abs{Q} \} -2$.

For general (not necessarily ordered) $P$ and $Q$, the computation of $\vconv(P+Q)$ requires $\mathcal{O}\bigl((\abs{P}+\abs{Q})\log(\abs{P}+\abs{Q})\bigr)$ runtime, where the additional log factor is due to sorting algorithms. See \citep[e.g.][]{Fukuda2004,Weibel2007} for the computation of Minkowski sum in $\mathbb{R}^d$ with $d \ge 2$.  

\subsection{Quasiconvexity}
We recall some basic results of quasiconvexity, a useful generalization of convexity, see e.g., \citep[Section 4 in Chapter 3]{BoVa04} for further details and the proofs.  
\begin{definition}
\label{quasiconvex}
Let $\mathcal{D} \subseteq \mathbb{R}^d$ be a nonempty convex set. A function $f: \mathcal{D} \rightarrow \mathbb{R}$ is called \emph{quasiconvex} if its sublevel set $\mathcal{D}_\alpha :=\{x\in \mathcal{D}\,| \, f(x)\leq \alpha\}$ is convex for every $\alpha \in \mathbb{R}$.
\end{definition}
Note that convex functions are clearly quasiconvex, and that many properties of convex functions carry over to quasiconvex functions.
\begin{proposition}
\label{pp:max}
let $\mathcal{D} \subseteq \mathbb{R}^d$ be a nonempty convex set. A function $f:\mathcal{D}\rightarrow \mathbb{R}$ is quasiconvex if and only if for any $s_1,s_2 \in \mathcal{D}$ and any $\lambda \in [0,1]$ it holds that
$$
f(\lambda \cdot  s_1 + (1-\lambda)\cdot s_2 ) \leq \max{\{f(s_1),f(s_2)\}}.
$$
\end{proposition}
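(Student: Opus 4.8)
The plan is to prove the two implications separately, since the statement is an equivalence between the sublevel-set definition of quasiconvexity (Definition \ref{quasiconvex}) and the ``max'' inequality. Both directions follow by unwinding the definitions, so I do not anticipate a genuine technical obstacle; the only thing to get right is the choice of threshold level in the forward direction, which is where the argument could feel mysterious if stated carelessly.

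For the forward implication, I would assume $f$ is quasiconvex, fix arbitrary $s_1, s_2 \in \mathcal{D}$ and $\lambda \in [0,1]$, and set $\alpha := \max\{f(s_1), f(s_2)\}$. With this choice both $s_1$ and $s_2$ lie in the sublevel set $\mathcal{D}_\alpha = \{x \in \mathcal{D} : f(x) \leq \alpha\}$, which is convex by hypothesis. Convexity of $\mathcal{D}_\alpha$ then forces $\lambda s_1 + (1-\lambda) s_2 \in \mathcal{D}_\alpha$, which is precisely the desired inequality $f(\lambda s_1 + (1-\lambda) s_2) \leq \alpha = \max\{f(s_1), f(s_2)\}$. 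The key insight is that taking $\alpha$ to be the maximum of the two function values is exactly what places both points in a single common sublevel set; no other value would do.

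For the converse, I would assume the max inequality holds for all points and all $\lambda$, fix an arbitrary level $\alpha \in \mathbb{R}$, and verify directly that $\mathcal{D}_\alpha$ is convex. Taking any $x, y \in \mathcal{D}_\alpha$ and $\lambda \in [0,1]$, we have $f(x) \leq \alpha$ and $f(y) \leq \alpha$, so the hypothesis yields $f(\lambda x + (1-\lambda) y) \leq \max\{f(x), f(y)\} \leq \alpha$, placing the convex combination back in $\mathcal{D}_\alpha$. Since $\alpha$ was arbitrary, every sublevel set is convex, so $f$ is quasiconvex by definition. One point worth flagging explicitly is that the assumed convexity of the domain $\mathcal{D}$ is what guarantees the convex combinations appearing throughout actually lie in $\mathcal{D}$, so that $f$ is defined there and each step is well-posed; without this the statement would not even typecheck.
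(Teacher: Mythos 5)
Your proof is correct and is exactly the standard argument: the paper itself gives no proof of this proposition, deferring instead to \citet[Section 4 in Chapter 3]{BoVa04}, and your two-direction argument (choosing $\alpha = \max\{f(s_1),f(s_2)\}$ for the forward implication, and verifying convexity of an arbitrary sublevel set for the converse) is precisely the textbook proof found there. No gaps; your remark that convexity of $\mathcal{D}$ is needed for the convex combinations to lie in the domain is a fair point that the statement implicitly relies on.
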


\section{An $\mathcal{O}(n)$ method for quantile simulation }\label{s:main}
In this section, we first consider the computation of quantiles $q_n(\alpha)$ for multiscale change-point segmentation methods.  We will show that the evaluation of $T_n(Y,\vartheta)$ is equivalent to finding the maximal value of a quasiconvex function over a constrained Minkowski sum. 

\subsection{Fast quantile simulation}\label{ss:fastQ}
We start with the Gaussian mean regression model \eqref{GaussMean}, and the penalty term $p_I$ given in \cite{FriMunSie14}. Note that in model \eqref{GaussMean} the distribution of $T_n(Y,\vartheta)$ is independent of $\vartheta$. Thus, it is sufficient to consider
\begin{align}
\label{normal}
T_n \coloneqq T_n(Y,\vartheta\equiv 0) = \max_{1 \le i \le j \le n}\frac{1}{\sigma\sqrt{j-i+1}}\abs[B]{\sum \limits_{k=i}^j Y_k} -  \sqrt{2\log(\frac{en}{j-i+1})}. 
\end{align}
The direct evaluation of \eqref{normal} leads to $\mathcal{O}(n^3)$ runtime. As the summation can be viewed as convolution, the evaluation of \eqref{normal} can be speeded up by utilizing fast Fourier transforms, resulting in  $\mathcal{O}(n^2 \log n)$ runtime (which is implemented in CRAN R-package ``stepR''), see e.g.~\cite{Hotz2013}.  A further speedup is possible by means of cumulative sum transformation $\mathrm{cs_m}:=\sum_{k=1}^mY_k$, which reduces a summation over $\{i, \ldots, j\}$ to a single subtraction. This leads to an algorithm of $\mathcal{O}(n^2)$ complexity (which is implemented in CRAN R-package ``FDRSeg''), see also \cite{Allison2003}. In what follows, we will present a fast algorithm for evaluating \eqref{normal} in a linear runtime, i.e., $\mathcal{O}(n)$. 

For $1\le i \le j \le n$, we define $s_{i,j}\coloneqq \sum_{k=i}^j Y_k $, and $\ell_{i,j}\coloneqq  j-i+1$. The evaluation of $T_n(Y,\vartheta\equiv 0)$ in \eqref{normal} can be written as an optimization of a bivariate function over finite collection of points, more precisely,
\begin{equation}\label{eq:2dOpt}
T_n= \max_{1 \le i \le j \le n} h(\ell_{i,j},s_{i,j}) \quad \text{with }h(x_1,x_2)\coloneqq \frac{\abs{x_2}}{\sigma\sqrt{x_1}}-\sqrt{2\log \frac{en}{x_1}}.
\end{equation}
 
\begin{proposition}
The bivariate function $h$ in \eqref{eq:2dOpt} is quasiconvex over $(0,n]\times\R$. 
\end{proposition}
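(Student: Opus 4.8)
The plan is to argue directly from the sublevel-set characterization (Definition~\ref{quasiconvex}) rather than from Proposition~\ref{pp:max}, because $h$ is non-smooth along $\{x_2=0\}$ and a Hessian computation would be awkward there, whereas the sublevel sets handle the absolute value painlessly. Fix $\alpha\in\R$ and rewrite the sublevel set by isolating $\abs{x_2}$ in the inequality $h(x_1,x_2)\le\alpha$:
\[
\mathcal{D}_\alpha=\set{(x_1,x_2)\in(0,n]\times\R}{\abs{x_2}\le R_\alpha(x_1)},\qquad R_\alpha(x_1):=\sigma\sqrt{x_1}\left(\alpha+\sqrt{2\log\tfrac{en}{x_1}}\right).
\]
Since this region is symmetric in $x_2$ and is exactly the intersection $\{x_2\le R_\alpha(x_1)\}\cap\{-x_2\le R_\alpha(x_1)\}$, it is convex provided two purely one-dimensional facts hold: (i) the effective domain $\{x_1:R_\alpha(x_1)\ge0\}$ is an interval, and (ii) $R_\alpha$ is concave on that domain. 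This reduces the whole proposition to analyzing the single-variable function $R_\alpha$.

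For (i), note that $x_1\in(0,n]$ forces $L:=\log(en/x_1)\ge1$, so the factor $\alpha+\sqrt{2L}$ is continuous and strictly decreasing in $x_1$; as $\sigma\sqrt{x_1}>0$, the sign of $R_\alpha$ equals that of this factor, and therefore $\{R_\alpha\ge0\}$ is an interval of the form $(0,b]$. For (ii) I would substitute $L$ and carry out the (routine) differentiation, which, after simplification using $\tfrac{d}{dx_1}(x_1L)=L-1$, yields
\[
R_\alpha''(x_1)=-\frac{\sigma}{4}\,x_1^{-3/2}\left(\alpha+\sqrt{2}\,\frac{L^2+1}{L^{3/2}}\right),
\]
so concavity is equivalent to nonnegativity of the bracketed term.

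The crux, and the step I expect to be the main obstacle, is that for $\alpha<0$ the term $\sigma\alpha\sqrt{x_1}$ is \emph{convex}, so $R_\alpha$ fails to be concave on all of $(0,n]$ and (ii) cannot be proved globally. The resolution is to exploit the very constraint that defines the effective domain: on $\{R_\alpha\ge0\}$ we have $\alpha\ge-\sqrt{2L}$, and the trivial inequality $L^2\le L^2+1$ rearranges to $\sqrt{2L}\le\sqrt{2}\,(L^2+1)/L^{3/2}$, whence
\[
\alpha\ge-\sqrt{2L}\ge-\sqrt{2}\,\frac{L^2+1}{L^{3/2}}.
\]
This forces the bracketed term above to be nonnegative precisely on the region where convexity is needed, giving $R_\alpha''\le0$ there and establishing (ii). As $\alpha$ was arbitrary, every $\mathcal{D}_\alpha$ is convex and $h$ is quasiconvex. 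I would also check the boundary case $R_\alpha(x_1)=0$ (where $\alpha=-\sqrt{2L}$ exactly): there the bracket equals $\sqrt{2}/L^{3/2}>0$, so concavity in fact holds strictly, confirming there is no degeneracy at the edge of the effective domain.
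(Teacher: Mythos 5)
Your proof is correct and takes essentially the same route as the paper's: both work from the sublevel-set definition, reduce convexity of $\mathcal{D}_\alpha$ to concavity of the one-dimensional boundary function (your $R_\alpha$ is the paper's $g$ times $\sigma$), and sign the second derivative using the constraint $\alpha + \sqrt{2\log(en/x_1)} \ge 0$ on the region where the set is nonempty; indeed your bracketed expression $\alpha + \sqrt{2}\,(L^2+1)/L^{3/2}$ is algebraically identical to the paper's two-term formula for $g''$. Your treatment is somewhat more careful than the paper's in making explicit that the effective domain is an interval and that the constraint is used pointwise (so concavity need only hold there, which matters when $\alpha<0$), but this is a refinement of the same argument rather than a different one.
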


\begin{proof} 
By Definition \ref{quasiconvex}, it is sufficient to show that sublevel set 
$$
\mathcal{D}_{\alpha} = \set{(x_1,x_2)}{\abs{x_2} \leq \sigma(\alpha +\sqrt{2\log\frac{en}{x_1}})\sqrt{x_1}, \text{ and } 0< x_1 \le n}
$$ 
is convex for all $\alpha \in \mathbb{R}$. Define $g(x_1) \coloneqq \bigl(\alpha + \sqrt{2\log (en/x_1)}\bigr)\sqrt{x_1}$ for $x_1 > 0$.  Notice that it is trivial when $g(x_1) < 0$ because sublevel set $\mathcal{D}_{\alpha}$ is empty. If $\mathcal{D}_{\alpha}$  is not empty, it follows that $\bigl(\alpha + \sqrt{2\log (en/x_1)}\bigr) \geq 0$. Noting that $x_1 \le n$ implies $\log{({en}/{x_1})} \geq 1$, we have  
$$
g''(x_1) = -\frac{1}{4}x_1^{-{3}/{2}}\biggl(\alpha + \sqrt{2\log{\frac{en}{x_1}}}\biggr) - \biggl(2x_1\log{\frac{en}{x_1}}\biggr)^{-{3}/{2}} < 0 .
$$
Thus, $g(\cdot)$ is concave, and it follows that $\mathcal{D}_{\alpha }$ is convex for all $\alpha$. 
\end{proof}

By Proposition~\ref{pp:max} we have that the maximal value of $f$ in \eqref{eq:2dOpt} over $\{(s_{i,j}, \ell_{i,j})\}_{i,j}$ is attained at the vertices of the convex hull of $\{(s_{i,j}, \ell_{i,j})\}_{i,j}$. To be precise, we define $P \coloneqq \set[n]{p_i}{p_i = (i, \sum_{j=1}^i Y_j),\, i = 1, \ldots, n}$ and $Q \coloneqq\set[n]{q_i}{q_i = (i-n, -\sum_{j=1}^{n-i}Y_j),\, i = 1,\ldots, n}$ with the convention that $\sum_{i = 1}^0 Y_i = 0$. Note that $(\ell_{i,j},s_{i,j}) = p_i + q_{n-j+1}$. It follows that
\[
T_n = \max_{x \in (P\oplus Q)^+} h(x) = \max_{x \in \mathrm{conv}(P\oplus Q)^+}h(x) = \max_{x \in \mathrm{vconv}(P\oplus Q)^+}h(x). 
\]
Moreover, it is known that there is a linear algorithm for finding $\vconv(P\oplus Q)^+$ (see Section~\ref{ss:cMs}). Based on it, we can derive a linear algorithm for the evaluation of $T_n$, the details of which is given in Algorithm~\ref{agg1}. 

\begin{algorithm}
\label{agg1}
  \SetAlgoLined
  \KwIn{Observations $Y_1,\ldots,Y_n$.}
  \KwOut{The value of  $T_n$ in \eqref{normal}.}
  \textbf{Initialization}: Define $P \coloneqq \{p_i\}_{i=1}^n$ with $p_i \equiv (i, \sum_{j=1}^iY_j)$, and 
   $Q\coloneqq\{q_i\}_{i = 1}^n$ with $q_i \equiv (i-n,-\sum_{j=1}^{n-i}Y_j)$;
Set $R,K_0,\bar{K}_0,D_0$ as the empty set in $\R^2$\;
   Apply the incremental Graham scan algorithm to $P$ (from $p_n$ to $p_1$)\;
   \For{$i = 1,\dots,n$ }{
   $p_i^u$ $\leftarrow$ the neighbor points of $p_i$ on  $\vconv(\{p_i, \ldots, p_n\}) \cap (\R\times \R_+)$\;
   $p_i^l$ $\leftarrow$ the neighbor points of $p_i$ on  $\vconv(\{p_i, \ldots, p_n\}) \cap (\R\times\R_-)$\;
   }
   Append points to point-set $R$ recursively\;
    \For{$i =1,\ldots, n$}{
    Compute $\vconv\{q_{n-i+1},\dots, q_{n}\}$ via the incremental Graham scan algorithm (from $q_{n}$ to $q_{n-i+1}$)\;
    \For{$q_j \in\mathbf{vconv}\{q_{n-i+1},\dots,q_{n}\}$}{
	 \If{$((0,0),p_i-p_i^u,q_j-q_{j+1})$ is counterclockwise}{
		$K_i\leftarrow$ $K_{i-1} \bigcup \{q_j\}$ \quad\qquad\qquad \# $q_j$ belongs to $\vconv(P\oplus Q)^+ \cap (\R\times \R_+)$\\
			}
	\ElseIf{$((0,0),p_i-p_i^l,q_j-q_{j+1})$ is clockwise}{
	$K_i\leftarrow$ $K_{i-1} \bigcup \{q_j\}$ \quad\qquad\qquad \# $q_j$ belongs to $\vconv(P\oplus Q)^+ \cap (\R\times \R_-)$\\
	}
    }
    $R \leftarrow  R \,\bigcup \, (\{p_i\}\oplus K_i)$\;
    $D_i \leftarrow$  $ \{q_{n-i+1},\dots,q_{n}\} \setminus \vconv\{q_{n-i+1},\dots,q_{n}\}$\;
    $\bar{K_i} \leftarrow \{q_{i_2}^*,\dots, q_{i_{\mu-1}}^*\}$ (if denote $K_{i}\equiv \{q_{i_1}^*,q_{i_2}^*,\dots,q_{i_{\mu-1}}^*,q_{i_\mu}^* \}$)\;
    Update $Q \leftarrow  Q \setminus \{D_i \cup \bar{K}_i\}$\;
    $i\leftarrow i+1$\;
    }
 Evaluate the value of $f$ in \eqref{eq:2dOpt} over $R$ and find the maximal value $T_n$.
 \caption{Evaluation of $T_n$ for the Gaussian mean regression model.}
 \end{algorithm}
 
In Algorithm \ref{agg1}, the incremental Graham scan algorithm \citep{Graham1972} is employed in first step to compute the convex hull of $P$ in $\mathcal{O}(n)$ runtime on line 2. For each point $p_i$, we consider $\conv\{q_{n-i+1}, \ldots ,q_n\}$ in order to satisfy the constraint $(p+q)_{x_1}> 0$~(line~9).  Among such points, we compute a set $K_i:=\{q_{i_1}^*,\dots,q_{i_\mu}^*\}$ that contains the vertices involving $p_i$ in $\vconv(P\oplus Q)^+$ ~(line 11-16). After recording $(p_i \oplus K_i)$  to $R$ (line 18), we delete  $\bar{K}_{i} \coloneqq \{q_{i_2}^*,\dots, q_{i_{\mu-1}}^*\}$ and $D_{i} \coloneqq\{q_{n-i+1},\dots, q_n\} \setminus \vconv(\{q_{n - i+1},\dots,q_n\})$ from $Q$ (line 21), because  there is no point in $\conv(P\oplus Q)^+$ of the form $p_j+q$ for $j > i $ and $q\in \bar{K}_i \bigcup D_i$, see \cite{BEH09} for a proof. Then the algorithm proceeds recursively; each time we update $R, K_i,\bar{K}_i$ and $ D_i$. In the end, the set $R$, being a subset of $ (P\oplus Q)^+$,  contains $\vconv(P\oplus Q)^+$. The maximal value $T_n$ can be obtained on $R$.

As $T_n$ in \eqref{normal} is independent of $\sigma$, we can always assume $\sigma = 1$. Given realization $\{Y_1, \ldots, Y_n\}\stackrel{\mathrm{i.i.d.}}{\sim} \mathcal{N}(0,1)$, we compute $T_n$ via Algorithm \ref{agg1}. The quantile of $T_n$ is computed via $r$ repetitions of such a procedure. Thus, the quantile of $T_n$ can be computed in $\mathcal{O}(nr)$ runtime. This is significantly faster than the best existing algorithm, which is of $\mathcal{O}(n^2r)$ runtime. 
In general, larger $r$ leads to more precise estimation of the quantile. In practice, we find that the estimate is quite stable for $r\ge 5,000$ (see Figure \ref{fig:qnt}), and thus suggest $r = 5,000$ as the default choice. 

\begin{figure}[!h] 
 \centering
 \includegraphics[width=4.5in]{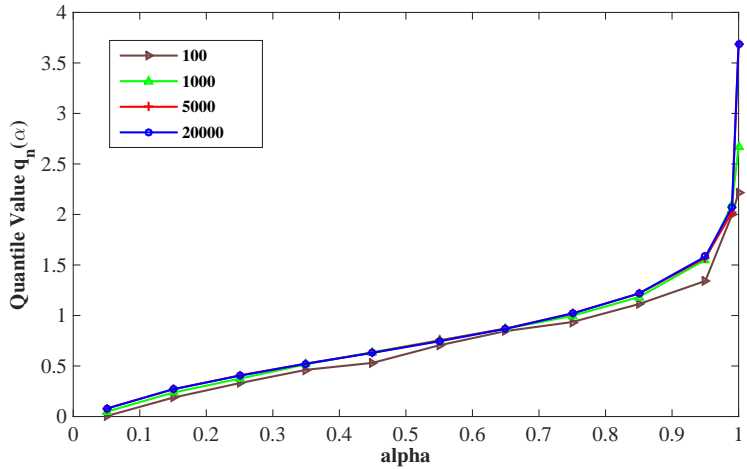}
 \caption{Empirical quantile function of $T_n$ in \eqref{normal} for different number of repetitions $r$.\label{fig:qnt}}
\end{figure}

\subsection{Quasiconvexity of exponential family regression}\label{ss:exp}

By Proposition~\ref{pp:max}, a larger class of quasiconvex objective functions $h(\ell_{i,j},s_{i,j})$ attain the maximum over $(P + Q)^+$ on $\vconv(P + Q)^+$.  In exponential family regression model, the multiscale statistic \eqref{Tn} is independent of $\vartheta$. Similar to Gaussian mean model \eqref{eq:2dOpt}, it can be mapped into a bivariate function $h:(0,n]\times\R \rightarrow \R$.  Next we will discuss the computation of multiscale statistics for the exponential family regression model \eqref{model} and extend Algorithm~\ref{agg1} to a general form.
 
We assume in model (\ref{model}), the independent observations $Y_i$ ($i = 0,\dots,n-1)$ come from a exponential family with standard density
$$
f_{\theta}(x) = \text{exp}\{\theta x - \psi(\theta)\},
$$
where the cumulant function $\psi(\theta)$ is strictly convex on $\Theta$. Then the maximum scanning statistics $T_I$ in \eqref{localstatistics} through exponential family regression can be simplified as:
\begin{equation}
\label{scanning}
T_I(Y,\theta_0) = \sup_{\theta \in \Theta}\sum \limits_{k=i}^j \{ \theta Y_k -\psi(\theta) \} - \sum \limits_{k=i}^j \{ \theta_0Y_k -\psi(\theta_0) \}.
\end{equation}
Let $\ell_{i,j}$ and $s_{i,j}$ as defined before, then the evaluation of $T_I$ in \eqref{scanning} can be written as a supremum over $\vartheta$ of a bivariate function:
\begin{equation}
\label{TIopt}
T_I(Y,\theta_0) = \sup_{\theta \in \Theta}h_{\theta}(\ell_{i,j},s_{i,j})\quad \text{with}~ h_{\theta}(\ell,s)\coloneqq(\theta-\theta_0)s-\ell(\psi(\theta)-\psi(\theta_0)).
\end{equation}

\begin{lemma}
\label{lem1}
The bivariate function $\sup_{\theta \in \Theta}h_{\theta}(\cdot,\cdot)$ in \eqref{TIopt} is convex over $ (0,n] \times \R$.
\end{lemma}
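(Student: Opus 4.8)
The plan is to exploit the structural observation that, for each fixed $\theta \in \Theta$, the map $(\ell,s)\mapsto h_\theta(\ell,s) = (\theta-\theta_0)s - \ell\bigl(\psi(\theta)-\psi(\theta_0)\bigr)$ is \emph{affine} in the pair $(\ell,s)$: it is linear in $s$ with slope $\theta-\theta_0$ and linear in $\ell$ with slope $-\bigl(\psi(\theta)-\psi(\theta_0)\bigr)$, and has no nonlinear joint dependence on $(\ell,s)$. In particular, each $h_\theta$ is convex (indeed affine) on the convex domain $(0,n]\times\R$. Thus the quantity of interest, $H(\ell,s)\coloneqq\sup_{\theta\in\Theta}h_\theta(\ell,s)$, is a pointwise supremum of convex functions, and the heart of the argument is the standard convex-analysis fact that such a supremum is again convex.

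The key step is to verify this directly rather than merely cite it. For any two points $(\ell_1,s_1),(\ell_2,s_2)\in(0,n]\times\R$ and any $\lambda\in[0,1]$, affinity of each $h_\theta$ gives the exact identity
\[
h_\theta\bigl(\lambda(\ell_1,s_1)+(1-\lambda)(\ell_2,s_2)\bigr)=\lambda\, h_\theta(\ell_1,s_1)+(1-\lambda)\, h_\theta(\ell_2,s_2).
\]
Bounding each term on the right by the corresponding supremum over $\theta$ and then taking the supremum over $\theta$ on the left yields $H\bigl(\lambda(\ell_1,s_1)+(1-\lambda)(\ell_2,s_2)\bigr)\le\lambda\, H(\ell_1,s_1)+(1-\lambda)\, H(\ell_2,s_2)$, which is precisely convexity of $H$. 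Note that this argument uses only the affine-in-$(\ell,s)$ structure and the convexity of the domain; it does not require anything beyond that.

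The only point that needs separate attention — and the closest thing to an obstacle — is that the statement records $H$ as a function into $\R$, so I should confirm the supremum is finite on the whole domain rather than merely an extended-real-valued convex function. Here the strict convexity of the cumulant function $\psi$ is what I would invoke: for fixed $\ell>0$, the inner objective $\theta\mapsto(\theta-\theta_0)s-\ell\bigl(\psi(\theta)-\psi(\theta_0)\bigr)$ is strictly concave in $\theta$, so its supremum is attained at a unique maximizer under the usual exponential-family conditions on the natural parameter space, giving a finite value that coincides with the well-defined local likelihood ratio statistic $T_I$ in \eqref{TIopt}. I emphasize, however, that the convexity conclusion itself is insensitive to this finiteness; the strict-concavity remark is needed only to match the stated codomain $\R$, not to establish convexity.
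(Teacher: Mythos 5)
Your proposal is correct and follows essentially the same route as the paper: each $h_\theta$ is affine in $(\ell,s)$, and the pointwise supremum over $\theta$ is then shown to be convex by the standard inequality, exactly as in the paper's own proof. Your additional remark on finiteness of the supremum (to justify the codomain $\R$) is a point the paper glosses over, but it does not change the argument.
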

\begin{proof}
Since $h_\theta(\ell,s)$ is linear about $\ell, s$, it follows that $\forall (\ell_1,s_1),(\ell_2,s_2) \in (0,n]\times \R, \lambda \in [0,1]$,
\begin{align*}
\sup_{\theta \in \Theta}h_\theta((1-\lambda)\ell_1+\lambda \ell_2, (1-\lambda)s_1+\lambda s_2)&=\sup_{\theta \in \Theta}(\lambda h_\theta(\ell_1,s_1)+(1-\lambda)h_\theta(\ell_2,s_2))\\\nonumber
&\le \lambda \sup_{\theta \in \Theta}h_\theta(\ell_1,s_1) + (1-\lambda) \sup_{\theta \in \Theta}h_\theta(\ell_2,s_2).
\end{align*}
By the definition of convex function, $ \sup_{\theta \in \Theta}h_\theta$ is convex on $(0,n] \times \R$ .
\end{proof}
 
The multiscale statistic $T_n$ in \eqref{Tn} is made up of the scanning statistic $T_I$ and a penalty term $p_I$. The penalty function $p_I$ working as a scale calibration only depends on interval length $\ell$. 
So multiscale statistic $T_n$ can be written as $\sup_{\theta \in \Theta}h_{\theta}(\ell_{i,j},s_{i,j})$ in \eqref{TIopt} added by a penalty function:
\begin{equation}
\label{TnIopt}
T_n(Y,\theta_0) = \sup_{\theta \in \Theta}h_{\theta}(\ell_{i,j},s_{i,j})-p_I(\ell_{i,j}).
\end{equation}
By Lemma \ref{lem1}, the bivariate function $\sup_{\theta \in \Theta}h_{\theta}(\cdot,\cdot)$ is convex and it keeps convex if it is substracted by a concave penalty $p_I$. According to Proposition \ref{pp:max} the maximum of \eqref{TnIopt} over $\{(\ell_{i,j},s_{i,j})\}_{i,j}$ can be attained on the vertices of the convex hull of $\{(\ell_{i,j},s_{i,j})\}_{i,j}$. Thus, the optimization
of multiscale statistic $T_n$ can also be solved by Algorithm \ref{agg1}. We state this result in Theorem \ref{theorem1}.

\begin{theorem}
\label{theorem1}
The multiscale statistic $T_n$ for exponential family regression model with concave penalty terms can be evaluated in a linear runtime. 
\end{theorem}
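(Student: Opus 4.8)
The plan is to assemble the ingredients already in place: Lemma~\ref{lem1} together with the concavity hypothesis on the penalty will give convexity, hence quasiconvexity, of the objective; Proposition~\ref{pp:max} will reduce the maximization to the vertices of a convex hull; and the linear-time constrained Minkowski-sum machinery of Section~\ref{ss:cMs}, as realized in Algorithm~\ref{agg1}, will produce those vertices in $\mathcal{O}(n)$ time. The key conceptual point is that, apart from the final pointwise evaluation, nothing in the geometric core depends on the particular exponential family.

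First I would fix the reference $\theta_0$ and, as in \eqref{TnIopt}, write the objective as $F(\ell,s)\coloneqq \sup_{\theta\in\Theta}h_\theta(\ell,s)-p_I(\ell)$ on $(0,n]\times\R$. By Lemma~\ref{lem1} the first term is convex; since $p_I$ depends only on $\ell$ and is concave by assumption, $-p_I$ is convex, so $F$ is a sum of two convex functions and hence convex. As noted after Definition~\ref{quasiconvex}, convexity implies quasiconvexity, so $F$ is quasiconvex on $(0,n]\times\R$. Next I would observe that the finite point set over which we maximize, $\{(\ell_{i,j},s_{i,j})\}_{1\le i\le j\le n}$, is identical to the one arising in the Gaussian case and does not depend on the family, because $\ell_{i,j}=j-i+1$ and $s_{i,j}=\sum_{k=i}^j Y_k$ involve only interval lengths and partial sums. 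With $P$ and $Q$ defined as in Section~\ref{ss:fastQ}, this set equals $(P\oplus Q)^+$, and applying Proposition~\ref{pp:max} to the quasiconvex $F$ yields
\[
T_n(Y,\theta_0)=\max_{x\in (P\oplus Q)^+}F(x)=\max_{x\in\vconv(P\oplus Q)^+}F(x),
\]
so it suffices to evaluate $F$ on the vertex set. Since this set is the same as in the Gaussian setting, the geometric part of Algorithm~\ref{agg1} carries over verbatim and computes $\vconv(P\oplus Q)^+$, a collection of $\mathcal{O}(n)$ vertices, in $\mathcal{O}(n)$ time; only the terminal evaluation step is replaced by $F$.

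The main obstacle is to argue that evaluating $F$ at a single vertex costs $\mathcal{O}(1)$, so that the $\mathcal{O}(n)$ vertices contribute only $\mathcal{O}(n)$ in total. Here I would use the strict convexity of $\psi$: the inner supremum $\sup_{\theta}\{\theta s-\ell\psi(\theta)\}$ is attained at $\theta^\star=(\psi')^{-1}(s/\ell)$, which is well defined since $\psi'$ is strictly increasing and hence invertible. For the standard exponential families of interest this inverse, and thus $F$, is available in closed form and costs $\mathcal{O}(1)$ per vertex; in the absence of a closed form, the root of $\psi'(\theta)=s/\ell$ can be located to fixed precision by a constant number of monotone Newton steps, again $\mathcal{O}(1)$. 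Combining the $\mathcal{O}(n)$ hull computation with the $\mathcal{O}(n)$ pointwise evaluations yields the claimed linear runtime, and, exactly as in the Gaussian case, the null quantile then follows from $r$ independent repetitions in $\mathcal{O}(nr)$ time.
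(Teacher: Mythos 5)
Your proposal is correct and follows essentially the same route as the paper: Lemma~\ref{lem1} plus concavity of $p_I$ gives convexity (hence quasiconvexity) of the objective \eqref{TnIopt}, Proposition~\ref{pp:max} reduces the maximization to the vertices of the convex hull of $\{(\ell_{i,j},s_{i,j})\}_{i,j}$, and Algorithm~\ref{agg1} delivers those vertices in linear time. The one thing you add beyond the paper's argument is the explicit justification that evaluating $\sup_{\theta}h_\theta$ at each vertex costs $\mathcal{O}(1)$ via $\theta^\star=(\psi')^{-1}(s/\ell)$ --- a detail the paper leaves implicit, and a welcome tightening rather than a different approach.
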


In summary, the proposed algorithm is a general method for simulating multiscale statistic $T_n$  from exponential family regression model with convex penalization. It speeds up the existing algorithms to a linear runtime. Meanwhile, the memory space mainly used for storing points is bounded by the number of vertices in $\vconv(P+Q)^+$, i.e., $\mathcal{O}(\abs{P}+\abs{Q}) = \mathcal{O}(n)$.

\section{Simulation study}\label{s:sim}
This section examines the empirical performance of the proposed Algorithm \ref{agg1}. We provide the implementation of the proposed method in R package ``linearQ'', available from CRAN. 

\begin{figure}[!h]
 \centering
 \includegraphics[width=4.5in]{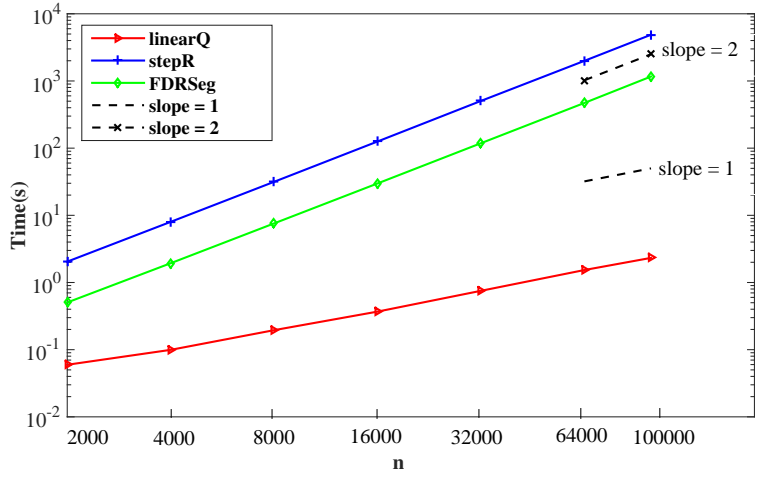}
 \caption{Gaussian mean regression: Average computation time of $T_n$ via various methods over 100 repetitions (both coordinates are in logarithmic scale). \label{fig3}}
 \end{figure}
 
We start with the Gaussian mean regression in \eqref{GaussMean}, and compare the proposed method with other existing methods. To this end, we consider the Fourier transform based algorithm, implemented in  CRAN R package ``stepR" \citep{FriMunSie14}, and  the cumulative sum based algorithm, implemented in CRAN R package ``FDRSeg" \citep{LiMuSi16}, see Section \ref{ss:fastQ}. The simulation data is generated as i.i.d.~realizations of standard normal random variables, for different sample sizes ranging from $2\times 10^3$ to $10^5$. For a given sample size, we repeat $r$ times, which is set to $100$. The average computation time for the evaluation of $T_n$ for different methods is reported in Figure~\ref{fig3}. It shows that the proposed method is significantly faster than the other two, achieving one order speed-up, with its computation complexity $\mathcal{O}(n)$.

 \begin{figure}[!h]
 \centering
 \includegraphics[width=4.5in]{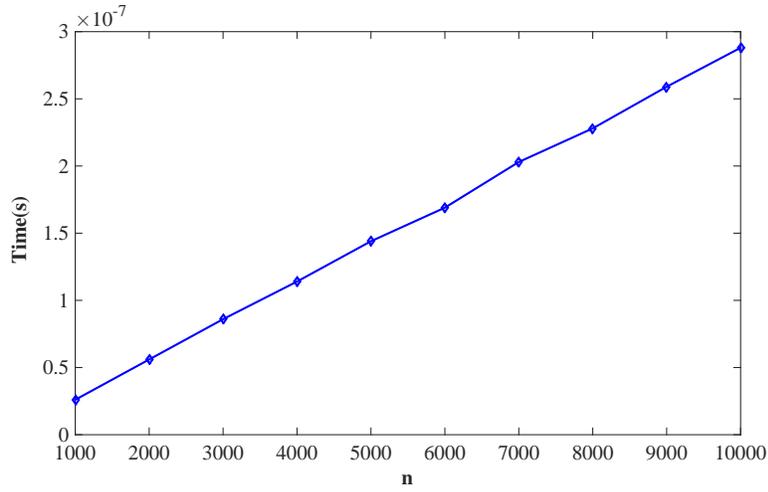}
 \caption{Poisson regression: Average computation time of $T_n$ via the proposed method over 100 repetitions. \label{fig4}}
 \end{figure}
 
In addition, the proposed method applies to every distribution in exponential family provided that the penalty term is convex, see Section \ref{ss:exp}. As a demonstration, we consider the Poisson case, i.e., $F_\theta$ in \eqref{model} is the Poisson distribution with mean $\theta$. Figure \ref{fig4} illustrates the computation time of evaluating $T_n$ without scale penalty, with the data sizes from $10^3$ to $10^4$ and repetition $r = 100$, via the proposed method. Again the empirical performance supports our theoretical complexity analysis that the computation time is linear in terms of sample size $n$.

\section{Conclusion}\label{s:cld}
The multiscale change-point segmentation methods are recognized as the-state-of-the-art in change-point inference, and have been playing an important role in various applications. In this paper, we propose a fast algorithm for the computation of the only tuning parameter of such multiscale change-point segmentation methods. The proposed method has a linear computation complexity and a linear memory complexity, in terms of the sample size, in sharp contrast to the existing methods with at least quadratic computation complexity. The crucial idea behind is to transform the original problem into the maximization of a quasiconvex function over a constrained Minkowski sum. The theoretical complexity is well supported by the empirical performance. Extension to general models beyond exponential family is a possible line of future research.

\bibliographystyle{apalike}
\bibliography{MultiscaleSegRef}

\begin{thebibliography}{}

\bibitem[Allison, 2003]{Allison2003}
Allison, L. (2003).
\newblock Longest biased interval and longest non-negative sum interval.
\newblock {\em Bioinformatics}, 19(10):1294.

\bibitem[Bernholt et~al., 2007]{BEH07}
Bernholt, T., Eisenbrand, F., and Hofmeister, T. (2007).
\newblock A geometric framework for solving subsequence problems in
  computational biology efficiently.
\newblock In {\em Computational geometry ({SCG}'07)}, pages 310--318. ACM, New
  York.

\bibitem[Bernholt et~al., 2009]{BEH09}
Bernholt, T., Eisenbrand, F., and Hofmeister, T. (2009).
\newblock Constrained {M}inkowski sums: a geometric framework for solving
  interval problems in computational biology efficiently.
\newblock {\em Discrete Comput. Geom.}, 42(1):22--36.

\bibitem[Bernholt and Hofmeister, 2006]{BerHof06}
Bernholt, T. and Hofmeister, T. (2006).
\newblock An algorithm for a generalized maximum subsequence problem.
\newblock In {\em L{ATIN} 2006: {T}heoretical informatics}, volume 3887 of {\em
  Lecture Notes in Comput. Sci.}, pages 178--189. Springer, Berlin.

\bibitem[Boyd and Vandenberghe, 2004]{BoVa04}
Boyd, S. and Vandenberghe, L. (2004).
\newblock {\em Convex optimization}.
\newblock Cambridge University Press, Cambridge.

\bibitem[Carlstein et~al., 1994]{Carlstein1994}
Carlstein, E., M�ller, H.~G., and Siegmund, D. (1994).
\newblock Change-point problems. papers from the ams-ims-siam summer research
  conference held at mt. holyoke college, south hadley, ma, usa, july 11, 16,
  1992.
\newblock {\em Institute of Mathematical Statistics Lecture Notes - Monograph
  Series}, 23.

\bibitem[Cs\"org\"o and Horv\`ath, 2011]{Cs2011}
Cs\"org\"o, M. and Horv\`ath, L. (2011).
\newblock Limit theorems in change-point analysis.
\newblock {\em John Wiley \& Sons Ltd Chichester}.

\bibitem[DavidSiegmund, 2013]{DavidSiegmund2013}
DavidSiegmund (2013).
\newblock Change-points: From sequential detection to biology and back.
\newblock {\em Communications in Statistics Part C Sequential Analysis},
  32(1):2--14.

\bibitem[Davies et~al., 2012]{DavHoeKra12}
Davies, L., H{\"o}henrieder, C., and Kr{\"a}mer, W. (2012).
\newblock Recursive computation of piecewise constant volatilities.
\newblock {\em Comput. Stat. Data Anal.}, 56(11):3623 -- 3631.

\bibitem[D\"umbgen and Spokoiny, 2001]{Dumbgen2001}
D\"umbgen, L. and Spokoiny, V.~G. (2001).
\newblock Multiscale testing of qualitative hypotheses.
\newblock {\em Ann. Statist.}, 29(1):124--152.

\bibitem[Frick et~al., 2014]{FriMunSie14}
Frick, K., Munk, A., and Sieling, H. (2014).
\newblock Multiscale change-point inference.
\newblock {\em J. R. Stat. Soc. Ser. B. Stat. Methodol., with discussion and
  rejoinder by the authors}, 76:495--580.

\bibitem[Fukuda, 2004]{Fukuda2004}
Fukuda, K. (2004).
\newblock From the zonotope construction to the minkowski addition of convex
  polytopes.
\newblock {\em Journal of Symbolic Computation}, 38(4):1261--1272.

\bibitem[Graham, 1972]{Graham1972}
Graham, R.~L. (1972).
\newblock An efficient algorith for determining the convex hull of a finite
  planar set.
\newblock {\em Information Processing Letters}, 1(4):132--133.

\bibitem[Hotz et~al., 2013]{Hotz2013}
Hotz, T., Schutte, O., Sieling, H., Polupanow, T., Diederichsen, U., Steinem,
  C., and Munk, A. (2013).
\newblock Idealizing ion channel recordings by a jump segmentation
  multiresolution filter.
\newblock {\em IEEE Transactions on Nanobioscience}, 12(4):376--386.

\bibitem[Li et~al., 2017]{LiGuMu17}
Li, H., Guo, Q., and Munk, A. (2017).
\newblock Multiscale change-point segmentation: Beyond step functions.
\newblock {\em arXiv preprint arXiv:1708.03942}.

\bibitem[Li et~al., 2016]{LiMuSi16}
Li, H., Munk, A., and Sieling, H. (2016).
\newblock F{DR}-control in multiscale change-point segmentation.
\newblock {\em Electron. J. Stat.}, 10(1):918--959.

\bibitem[Lio and Vannucci, 2000]{Lio2000}
Lio, P. and Vannucci, M. (2000).
\newblock Wavelet change-point prediction of transmembrane proteins.
\newblock {\em Bioinformatics}, 16(4):376.

\bibitem[Olshen et~al., 2004]{Olshen2004}
Olshen, A.~B., Venkatraman, E.~S., Lucito, R., and Wigler, M. (2004).
\newblock Circular binary segmentation for the analysis of array based dna copy
  number data.
\newblock {\em Biostatistics}, 5(4):557--572.

\bibitem[Reeves et~al., 2007]{Reeves2007}
Reeves, J., Chen, J., Wang, X.~L., Lund, R., and Lu, Q. (2007).
\newblock A review and comparison of changepoint detection techniques for
  climate data.
\newblock {\em Journal of Applied Meteorology \& Climatology}, 46(6):900.

\bibitem[Rivera and Walther, 2013]{RiWa13}
Rivera, C. and Walther, G. (2013).
\newblock Optimal detection of a jump in the intensity of a {P}oisson process
  or in a density with likelihood ratio statistics.
\newblock {\em Scand. J. Stat.}, 40(4):752--769.

\bibitem[Spokoiny, 2009]{Spo09}
Spokoiny, V. (2009).
\newblock Multiscale local change point detection with applications to
  value-at-risk.
\newblock {\em Ann. Statist.}, 37(3):1405--1436.

\bibitem[Venkatraman and Olshen, 2007]{Venkatraman2007}
Venkatraman, E.~S. and Olshen, A.~B. (2007).
\newblock A faster circular binary segmentation algorithm for the analysis of
  array cgh data.
\newblock {\em Bioinformatics}, 23(6):657.

\bibitem[Weibel, 2007]{Weibel2007}
Weibel, C. (2007).
\newblock Minkowski sums of polytopes.
\newblock {\em Similar Records}.

\end{thebibliography}

\end{document}